\newtheorem{theorem}{Theorem}
\newtheorem{lemma}[theorem]{Lemma}
\begin{document}

\begin{frontmatter}

\title{A proof of Hilbert's theorem on ternary quartic forms with the ladder technique}
\tnotetext[mytitlenote]{This work was partially supported by National Nature Science Foundation (61273007, 61673016), SWUN
Construction Projects for Graduate Degree Programs (2017XWD-S0805), Advance Research Program of
Electronic Science and Technology National Program (2017YYGZS16), Innovative Research Team of the
Education Department of Sichuan Province (15TD0050), Sichuan Youth Science and Technology Innovation
Research Team (2017TD0028).}

\author{Jia Xu}
\address{Computer Science and Technology Collage of Southwest University for Nationalities, Chengdu, China}
\ead{xufine@163.com}

\author{Yong Yao}
\address{Chengdu Institute of Computer Applications, Chinese Academy of Sciences, Chengdu, China}
\ead{yaoyong@casit.ac.cn}

\begin{abstract}
Hilbert's theorem states that every positive semi-definite real ternary quartic form can be written as a sum of squares of
quadratic forms. In this paper, we give a constructive proof with the rudiments of real analysis and the ladder technique. In order to ensure that sum of squares of quadratic forms can be practically constructed, the computation of real zero points is also discussed.
\end{abstract}

\begin{keyword}
Hilbert's theorem\sep positive semi-definite\sep ternary quartic forms\sep sum of squares\sep the ladder technique
\MSC[2010] 11E25 \sep 14P05
\end{keyword}
\end{frontmatter}

\section{Introduction}
In 1888, David Hilbert published a paper \citep{Hilbert} on the problem whether a positive semi-definite real polynomial is inevitably a sum of squares of other real polynomials. This work is influential and inspires a lot of great works in researchers even today. In the exceptional case, Hilbert proved that a positive semi-definite real ternary quartic form can be written as a sum of squares of
quadratic forms. Hilbert's proof is very brief but rather hard, because some complicated mathematical tools are used (see comments in \cite{Pfister} and \cite{Choi1}). Furthermore, Hilbert's method did not lend itself to a really practical construction. In
1977, Choi and Lam \citep{Choi1} showed a graceful elementary proof, in which only the rudiments of real analysis and the representation theorem of convex set are utilized. However,this method is not constructive either. Compared with \cite{Choi1}, though there are some similarities in the rudiments of real analysis, the proof of the present paper has the following two conspicuous differences. First, we take the ladder technique instead of the representation theorem of convex set. Furthermore, the method
of our proof is constructive whereas the one in \cite{Choi1} is not, since the representation theorem of convex set is an existence
theorem. Thus, for a positive semi-definite ternary quartic form, its explicit representation as a sum of squares can be constructed step by step according to the method addressed in this paper. From this perspective, we emphasize that this is not a completely new proof but, rather, a completely constructive account of the proof in \cite{Choi1}.

The rest of this paper is organized as follows. In section 2, four lemmas are presented. Each of them is like a step of the ladder, and the highest one (Lemma 4) achieves the goal. With the ladder technique, the Hilbert's theorem is proved easily. In section 3, the computation of real zero points is discussed, which provides the concrete details for the construction of sum of squares.

\section{Proof of Hilbert's theorem}

Firstly, the following are required. A form $f\in \mathbb{R}[x,y,z]$ is said to be positive semi-definite (psd) if
$$f(P)\geq 0,\ \forall P\in \mathbb{R}^3,$$ and ${\rm PSD}^4_3$ represents the set of all positive semi-definite ternary quartic forms. SOS
stands for the set of sum of squares of polynomials.

Given $P_1, P_2\in \mathbb{R}^3 \setminus \{ (0,0,0)\}$ satisfying $f(P_1)=f(P_2)=0$, we call $P_1$ and $P_2$ are same if there is a non-zero
number $\lambda$ such that $P_1=\lambda P_2$. That is, we discuss the zeros of $f$ in the real projective space $\mathbf{P}^2(\mathbb{R})$. The
set of  zeros in $\mathbf{P}^2(\mathbb{R})$ is denoted by $Z(f)$, and $|Z(f)|$ stands for the number of elements of $Z(f)$. The unit sphere is written as
$S^2=\{(x,y,z)|\ x^2+y^2+z^2=1\}$.

Next, we will present four lemmas.

\begin{lemma}
If the psd quartic form $f\in \mathbb{R}[x,y,z]$ has no zero in $\mathbf{P}^2(\mathbb{R})$ (i.e. $|Z(f)|=0$), then there is a quadratic form $g\in \mathbb{R}[x,y,z]$
such that $f-g^2$ is positive semi-definite and has at least one zero in $\mathbf{P}^2(\mathbb{R})$ (i.e. $|Z(f-g^2)|\geq 1$).
\end{lemma}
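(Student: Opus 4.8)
The plan is to exploit the compactness of $S^2$. Since $f$ is psd and has no zero in $\mathbf{P}^2(\mathbb{R})$, the quartic $f$ cannot vanish at any nonzero point of $\mathbb{R}^3$; combined with $f\geq 0$ this gives $f(P)>0$ for every $P\in S^2$. As $S^2$ is compact and $f$ is continuous, the minimum $m:=\min_{P\in S^2} f(P)$ is attained, say at some $P_0\in S^2$, and $m>0$. The idea is then to subtract off the largest admissible scalar multiple of the square of $x^2+y^2+z^2$: set
\begin{equation}
g:=\sqrt{m}\,(x^2+y^2+z^2),
\end{equation}
which is a genuine quadratic form in $\mathbb{R}[x,y,z]$, so that $g^2=m\,(x^2+y^2+z^2)^2$ is a quartic form and $h:=f-g^2\in\mathbb{R}[x,y,z]$ is again a quartic form.

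Next I would verify the two required properties of $h$. For positive semi-definiteness, note that on $S^2$ one has $(x^2+y^2+z^2)^2=1$, so $h(P)=f(P)-m\geq 0$ for all $P\in S^2$ by the definition of $m$. Homogeneity then promotes this to all of $\mathbb{R}^3$: an arbitrary $Q\neq(0,0,0)$ can be written $Q=rP$ with $r=\|Q\|>0$ and $P\in S^2$, whence $h(Q)=r^4 h(P)\geq 0$, and $h(0,0,0)=0$; hence $f-g^2$ is psd. For the existence of a zero, observe that $h(P_0)=f(P_0)-m=0$, so $P_0$ determines an element of $Z(f-g^2)$ and therefore $|Z(f-g^2)|\geq 1$, as required.

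Finally, a word on constructiveness, which is the point of phrasing the lemma as a rung of the ladder. The only non-algebraic ingredient is the number $m$ together with a point $P_0\in S^2$ realising it; both are obtained by locating the critical points of $f$ on $S^2$ (for instance via Lagrange multipliers, i.e.\ solving the polynomial system $\nabla f=2\mu\,(x,y,z)$, $x^2+y^2+z^2=1$) and selecting the one on which $f$ is smallest --- precisely the kind of real zero-point computation carried out in Section~3 --- so $g$ can be written down explicitly. I do not anticipate a genuine obstacle here: the compactness argument is routine, and the only points demanding care are keeping track of homogeneity when passing from $S^2$ to $\mathbb{R}^3$ and confirming that $g$ is indeed a degree-two form rather than an arbitrary polynomial.
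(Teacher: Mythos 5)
Your argument is correct and follows essentially the same route as the paper: take the minimum $m>0$ of $f$ on $S^2$, set $g=\sqrt{m}\,(x^2+y^2+z^2)$, and use homogeneity to conclude that $f-g^2$ is psd with a zero at the minimizing point. The extra remarks on homogeneity and on computing $m$ constructively are consistent with what the paper defers to Section~3.
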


\begin{proof}
Since $f$ is continuous on the unit sphere $S^2$, $f$ inevitably attains a minimum $\lambda$. It is obvious that $\lambda>0$ and
$$f-\lambda(x^2+y^2+z^2)^2\geq 0.$$ Let $g=\sqrt{\lambda}(x^2+y^2+z^2)$, and then $f-g^2$ is positive semi-definite and $|Z(f-g^2)|\geq 1$.
\end{proof}

\begin{lemma}
If the psd quartic form $f\in \mathbb{R}[x,y,z]$ has only one zero in $\mathbf{P}^2(\mathbb{R})$ (i.e. $|Z(f)|=1$), then there are quadratic forms $g_1,g_2\in \mathbb{R}[x,y,z] $ and non-negative real numbers $a$ and $b$ with $a^2+b^2\neq 0$
such that $f-ag_1^2-bg_2^2$  is positive semi-definite and $|Z(f-ag_1^2-bg_2^2)|\geq 2$.
\end{lemma}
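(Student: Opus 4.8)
The plan is to put $f$ into a normal form by moving its unique zero to a coordinate point, and then to create a second zero by subtracting a single, very specific square.

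First I would normalize. Since being psd, being a sum of squares, and the value of $|Z(\cdot)|$ are all preserved under invertible linear substitutions $T\in GL_3(\mathbb{R})$ (one proves the conclusion for $f\circ T$ and pulls back $g_1,g_2$), I may assume the unique zero of $f$ is $P_0=(0,0,1)$. Then $f(0,0,1)=0$ kills the $z^4$ term, and because $f\ge 0$ attains a minimum $0$ at $P_0$, the origin is a critical point of the dehomogenization $f(x,y,1)$, so the gradient there vanishes, which kills the terms $z^3\cdot(\text{linear in }x,y)$. Hence $f=z^2q_2(x,y)+zc_3(x,y)+q_4(x,y)$ with $q_2,c_3,q_4$ binary forms of degrees $2,3,4$. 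Looking at $f(x,y,1)$ near the origin, the dominant term is $q_2$, so $q_2$ is positive semi-definite; looking at $f$ along a direction $(x_0,y_0)$ with $q_2(x_0,y_0)=0$ (where $f$ is affine in $z$) forces $c_3(x_0,y_0)=0$ and $q_4(x_0,y_0)\ge 0$, indeed $q_4(x_0,y_0)>0$, for otherwise $f$ would vanish on a whole projective line, contradicting $|Z(f)|=1$.

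Assume first $q_2\not\equiv 0$. Set $\Phi:=4q_2q_4-c_3^2$. Completing the square in $z$ on the set $\{q_2>0\}$ shows that, for $a\ge 0$, the form $f-aq_2^2=q_2z^2+c_3z+(q_4-aq_2^2)$ is psd if and only if $\Phi-4aq_2^3\ge 0$ on $\mathbb{R}^2$; moreover $\Phi\ge 0$ everywhere, and on the unit circle $S^1=\{(x,y)\mid x^2+y^2=1\}$ one has $\Phi>0$ except at the null directions of $q_2$ (at most one projective point, where in fact $\Phi=0$), because a zero of $\Phi$ at a point of $S^1$ with $q_2>0$ would give a double real root in $z$ and hence a zero of $f$ distinct from $P_0$. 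I then take $g_1:=q_2$ (a quadratic form with $g_1(P_0)=0$), $g_2$ arbitrary (say $g_2=x^2$), $b:=0$, and $a:=a^{*}:=\sup\{a\ge 0:\Phi-4aq_2^3\ge 0\ \text{on}\ \mathbb{R}^2\}$. With this choice $f-a^{*}q_2^2$ is psd and still vanishes at $P_0$; one then checks $a^{*}\in(0,\infty)$ and that a new zero appears. Finiteness is immediate (the inequality fails at any point with $q_2>0$ for $a$ large). For $a^{*}>0$ I would diagonalize $q_2$, say $q_2=x^2$; then psd-ness gives $c_3=2x\,r(x,y)$ and $f=(xz+r)^2+h$ with $h:=q_4-r^2$ a \emph{psd} binary quartic, so $\Phi=4q_2h$ and $\Phi/(4q_2^3)=h/x^4$. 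Since $h$ is a psd binary quartic, near the null direction it is either strictly positive (so $h/x^4\to+\infty$) or a constant multiple of $x^4$ (so $h/x^4$ is constant); in either case $\Phi/q_2^3$ is bounded below by a positive constant on $S^1$ away from the null set, whence $a^{*}>0$. Finally, at $a=a^{*}$, either $\Phi-4a^{*}q_2^3\equiv 0$, in which case $f-a^{*}q_2^2$ has a zero $(x,y,-c_3/(2q_2))$ for every $(x,y)$ with $q_2\ne 0$ (infinitely many), or it vanishes at some $(x_0,y_0)\in S^1$ with $q_2(x_0,y_0)>0$, giving a double root $z_0$ and a zero $(x_0,y_0,z_0)\ne P_0$. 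Either way $|Z(f-a^{*}g_1^2)|\ge 2$, and $a^2+b^2=a^{*2}\ne 0$, which settles this case.

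It remains to handle $q_2\equiv 0$, where the choice $g_1=q_2$ is vacuous. In that case psd-ness forces $c_3\equiv 0$ (else $f$ is affine and unbounded below in $z$ along some direction), so $f=q_4(x,y)$, which must be a positive definite binary quartic, for otherwise it vanishes along a projective line. I would then take $g_1:=x^2$, $b:=0$, and $a:=\min_{S^1}q_4/x^4$, which is positive and attained at some $(x_0,y_0)$ with $x_0\ne 0$ (the ratio blows up as $x\to 0$); then $f-ag_1^2=q_4-ax^4$ is psd, independent of $z$, and vanishes along the entire line $(x_0,y_0,z)$, so $|Z(f-ag_1^2)|=\infty$, while $P_0$ remains a zero. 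I expect the main obstacle to be precisely the step "$a^{*}>0$", i.e. showing that a genuinely nontrivial square can be subtracted: this is exactly where the hypothesis $|Z(f)|=1$ (as opposed to two zeros already, or a curve of zeros) is consumed, through the strict positivity of $\Phi$ on $S^1$ off the null directions of $q_2$ together with the controlled order of vanishing of $\Phi/q_2^3$ at those directions, which in turn rests on $h=q_4-r^2$ being a psd binary quartic.
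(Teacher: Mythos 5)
Your argument is correct in outline, and in one place it genuinely diverges from the paper. The paper, after the same normalization (unique zero at a coordinate point, so $f=z^2q_2+zc_3+q_4$ in your notation, $f=x^2p+2xq+r$ in theirs), splits on the rank of $q_2$: for rank $0$ it subtracts $\lambda(x^2+y^2)^2$ with $\lambda$ the minimum of $q_4$ on $S^1$ (your degenerate case, essentially identical); for rank $2$ it subtracts $\lambda q_2^2$ with $\lambda$ the minimum of the normalized discriminant on $S^1$ (again the same as your $a^{*}q_2^2$); but for rank $1$ it first peels off the mixed square $(xz+r)^2$ and either finds a second zero in $h=q_4-r^2$ or reduces to the rank-$0$ case, which is why the lemma is stated with \emph{two} squares. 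Your route instead always subtracts the single extremal multiple $a^{*}q_2^2$, with $a^{*}=\inf\Phi/(4q_2^3)$, handling all nonzero ranks uniformly; this is a legitimate variant, it buys a one-square conclusion ($b=0$ throughout), at the price of the more delicate analysis of $\Phi/q_2^3$ near the null direction of $q_2$, which the paper's peeling trick avoids.

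Three small patches are needed in your rank-$1$ analysis. First, you must explicitly rule out $h\equiv 0$: if $h\equiv 0$ then $f=(xz+r)^2$ has infinitely many projective zeros, contradicting $|Z(f)|=1$; without this exclusion your "constant multiple of $x^4$" branch could have constant $0$, giving $a^{*}=0$ and violating $a^2+b^2\neq 0$ (the paper makes the analogous remark that $r-q_1^2$ is not the zero form). Second, your dichotomy "near the null direction $h$ is either strictly positive or a constant multiple of $x^4$" is not exhaustive as stated: e.g.\ $h=x^2(x^2+y^2)$ vanishes at $(0,1)$ but is not a multiple of $x^4$. The correct statement is that a psd binary quartic vanishing at $(0,\pm 1)$ is divisible by $x^2$, say $h=x^2h_2$ with $h_2$ psd; then either $h_2(0,1)>0$ (so $h/x^4\to+\infty$) or $h_2=\alpha x^2$ with $\alpha>0$ by the previous point (so $h/x^4\equiv\alpha$), and combined with $h>0$ off the null direction (which you correctly extract from $|Z(f)|=1$) this gives $a^{*}>0$ and also justifies your final dichotomy at $a=a^{*}$. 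Third, the case where $q_2$ is definite is only implicit in your write-up; there the null set is empty, $\Phi>0$ on $S^1$, and $a^{*}>0$ follows at once by compactness, exactly as in the paper's case (iii). With these routine fixes your proposal is a complete and slightly sharper proof of the lemma.
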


\begin{proof}
By coordinates transformation , we can assume that $f(1,0,0)=0$, and $f$ can be written as
\begin{equation}
f=x^2p(y,z)+2xq(y,z)+r(y,z),\label{eq:1}
\end{equation}
where $p\geq0,\ r\geq 0$, and $pr-q^2\geq 0$. There are three cases of $p$ to be discussed.

Case (i): $p=0$ (~0 form). It is clearly that~$q$ is necessarily a 0 form. Then
$$f=r(y,z).$$
Since~$f$ has only one zero $(1,0,0)$, it holds that $r(y,z)>0$ for all $(y,z)\in S^1=\{(y,z)\in \mathbb{R}^2|\ y^2+z^2=1\}.$
Furthermore,~$f$ is continuous on the unit circle~$S^1$, so that~$f$
has the minimum~$\lambda$ on $S^1$. It is obviously that~$\lambda>0$ and
$$f-\lambda(y^2+z^2)^2\geq 0.$$
Let~$g=\sqrt{\lambda}(y^2+z^2)$. Thus $f-g^2$ is positive semi-definite and $|Z(f-g^2)|\geq 2$.

Case (ii): The rank of $p$ is~$1$. By coordinate transformation, assume that $p=y^2$. The condition $pr-q^2\geq 0$ implies
that $q$ is divided by $y$ (the coefficient of $z^3$ in~$q$  is 0). That is, there is $q_1$ satisfying~$q=yq_1$.
 Substituting it into (\ref{eq:1}) one has
$$f=x^2y^2+2xyq_1+r=(xy+q_1)^2+r-q_1^2.$$
Notice that $r-q_1^2\ (\geq 0)$ is not a 0 form, otherwise~$f$ would have infinite zeros on $S^2$. There are two situations to be considered.

 (a) ~$r-q_1^2$ has a zero ~$(y_0,z_0)$ on ~$S^1$, then it has at least 2 zeros~$(1,0,0)$ and~$(0,y_0,z_0)$ on
~$S^2$. Hence $f-(xy+q_1)^2$ is positive semi-definite and ~$|Z(f-(xy+q_1)^2)|\geq 2$.

 (b) ~$r-q_1^2$ has no zero on~$S^1$, then ~$f-(xy+q_1)^2$ yields case (i), so that there is a quadratic form~$g$ such that~$f-(xy+q_1)^2-g^2$ is positive semi-definite and~$|f-(xy+q_1)^2-g^2|\geq 2$.

Case (iii): The rank of $p$ is $2$. By coordinate transformation, $p$ may take the form $p=y^2+z^2$. Consequently, we claim that $pr-q^2>0$
 on~$S^1$. Otherwise if $pr-q^2|_{(y_0,z_0)}=0$ for $(y_0,z_0)\neq (0,0)$, then $(-q(y_0,z_0)/p(y_0,z_0),y_0,z_0)$ is a zero of $f$, conflicting with the fact that $f$ only has the zero $(1,0,0)$. Thus~$pr-q^2$ attains the minimum $\lambda>0$ on $S^1$. As a result,
$$(pr-q^2)-\lambda(y^2+z^2)^3\geq 0,$$
with equality at~$(y_0,z_0)\in S^1$. Hence
$$f-\lambda(y^2+z^2)^2$$
 is still positive semi-definite by discriminant, and has at least two zeros: $(1,0,0)$ and $(\frac{-q(y_0,z_0)}{p(y_0,z_0)},y_0,z_0)$.

 It follows from case (i) to case (iii) that Lemma 2 holds.
\end{proof}

\begin{lemma}
If the psd quartic form $f\in \mathbb{R}[x,y,z]$ has only two zeros in $\mathbf{P}^2(\mathbb{R})$ (i.e. $|Z(f)|=2$), then there are  quadratic forms $g_1,g_2,g_3\in \mathbb{R}[x,y,z] $ and non-negative real numbers $a,b,c$ with $a^2+b^2+c^2\neq 0$
such that $f-ag_1^2-bg_2^2-cg_3^2$   is positive semi-definite and $|Z(f-ag_1^2-bg_2^2-cg_3^2)|\geq 3$.
\end{lemma}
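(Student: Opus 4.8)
The plan is to follow the pattern of the proof of Lemma~2, keeping track of the two prescribed zeros throughout. First I would apply a real projective change of coordinates sending the two zeros to $(1,0,0)$ and $(0,1,0)$ and write
\[
f=x^2p(y,z)+2xq(y,z)+r(y,z),\qquad p\ge0,\ r\ge0,\ pr-q^2\ge0,
\]
exactly as in (\ref{eq:1}). Evaluating at $(0,1,0)$ gives $r(1,0)=0$; since $r$ is a psd binary quartic, $(1{:}0)$ is a root of even multiplicity, so $z^2\mid r$, and then $pr-q^2\ge0$ forces $q(1,0)=0$, hence $z\mid q$. Writing $r=z^2r_2$ and $q=zq_1$ with $r_2\ge0$, one gets $pr-q^2=z^2D$ with $D:=pr_2-q_1^2\ge0$. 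From here the remaining coordinate freedom is the group fixing $(1,0,0)$ and $(0,1,0)$ — rescalings of $x,y,z$ together with the shears $x\mapsto x+\lambda z$, $y\mapsto y+\mu z$ — all of which preserve the divisibility relations above.

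Next I would split on the rank of $p$. Rank $0$ cannot occur: it forces $q\equiv0$ and $f=r(y,z)$, which with $|Z(f)|<\infty$ makes $r$ positive definite, contradicting $r(1,0)=0$. For rank $1$, an admissible substitution makes $p=y^2$ (the alternative, $p$ proportional to $z^2$, forces $z^2\mid f$ and hence infinitely many zeros); then $y\mid q$, so combined with $z\mid q$ one may write $q=yz\ell$ with $\ell$ linear and $f=(xy+z\ell)^2+z^2(r_2-\ell^2)$, where $r_2-\ell^2$ is psd and, because $|Z(f)|=2$, not identically $0$. Since $xy+z\ell$ vanishes at both prescribed zeros, subtracting it off leaves the psd form $z^2(r_2-\ell^2)$, which vanishes along the whole line $z=0$ and thus has at least three (in fact infinitely many) zeros; take $g_1=xy+z\ell$, $a=1$, $b=c=0$.

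The main case is rank $p=2$, where an admissible substitution makes $p=y^2+z^2$. For $(y,z)\ne(0,0)$ the polynomial $f(\cdot,y,z)$ is a quadratic in $x$ with positive leading coefficient and minimum value $z^2D(y,z)/(y^2+z^2)$, so the zeros of $f$ other than $(1,0,0)$ and $(0,1,0)$ correspond to projective zeros of $D$ with $z\ne0$; by $|Z(f)|=2$ there are none, so the psd binary quartic $D$ vanishes, if at all, only at $(1{:}0)$. Hence $D$ is either (I) positive definite, or (II) of the form $z^2g$ with $g$ a positive-definite binary quadratic, or (III) a positive multiple of $z^4$. In (I), $D/\bigl(z^2(y^2+z^2)\bigr)$ is positive on $S^1\setminus\{(\pm1,0)\}$ and tends to $+\infty$ near the excluded points, hence attains a positive minimum $\mu$ at a point with $z\ne0$; then $f-\mu z^4$ is psd (because $(y^2+z^2)(r_2-\mu z^2)-q_1^2=D-\mu z^2(y^2+z^2)\ge0$), retains $(1,0,0)$ and $(0,1,0)$, and acquires a zero with $z\ne0$, so $g_1=z^2$, $a=\mu$ works. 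Case (II) is identical with $\mu=\min_{S^1}g$, unless the minimum is attained only at $(1{:}0)$; then $g$ is diagonal and $f-\mu z^4$ is either $(y^2+z^2)$ times a square (infinitely many zeros) or lands in case (III). In (III), a short computation from $(y^2+z^2)\mid q_1^2+cz^4$ shows that, after the admissible shear $x\mapsto x+\alpha z$, one has $f=(xy+\beta z^2)^2+(xz)^2$ with $\beta\ne0$; subtracting the full square $(xz)^2$ leaves the perfect square $(xy+\beta z^2)^2$, again with infinitely many zeros. Collecting the cases proves the lemma.

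I expect the delicate point to be precisely the verification that the optimal subtraction becomes tight at a genuinely new zero rather than merely at the old zero $(0,1,0)$: this is where the hypothesis $|Z(f)|=2$ enters decisively, by forcing $D$ to be strictly positive away from the line $z=0$, yet one still has to treat separately the degenerate strata where $D$ is divisible by $z^2$, and in the most degenerate stratum the cleanest escape is to recognize $f$ itself as a sum of two squares and peel one of them off. A secondary bookkeeping issue is to keep every coordinate normalization inside the subgroup that fixes the two prescribed zeros and to check that this restricted group still suffices to bring $p$ to $y^2+z^2$ (respectively $y^2$) in each rank.
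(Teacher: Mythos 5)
Your proposal is correct, and it reaches the conclusion by a genuinely reorganized route, so a comparison is worthwhile. The paper, after the same normalization $f=x^2p+2xzq+z^2r$ (your $q_1,r_2$ are its $q,r$), splits on whether $p$ or $r$ vanishes on $S^1$ and then on the discriminant $pr-q^2$: its case (ii)(b) subtracts $\lambda z^2(y^2+z^2)$ (i.e.\ two squares $\lambda(yz)^2+\lambda(z^2)^2$) with $\lambda$ the minimum of $(pr-q^2)/\bigl(p(y^2+z^2)\bigr)$, and then cascades back through case (ii)(a) and case (i) until a new zero or a line of zeros appears. You instead split on the rank of $p$, normalize $p$ to $y^2$ or $y^2+z^2$ by shears fixing the two zeros (careful but not strictly necessary, since any invertible linear change preserves psd, zero counts and squares of quadratics), and organize everything around $D=pr_2-q_1^2$, which the hypothesis $|Z(f)|=2$ forces to be nonvanishing off $z=0$; subtracting the single square $\mu z^4$ calibrated by the minimum of $D/\bigl(z^2(y^2+z^2)\bigr)$ then produces the new zero directly, and the degenerate strata $z^2\mid D$ are settled by the explicit identity $f=\bigl((x+\alpha z)y+\beta z^2\bigr)^2+\bigl((x+\alpha z)z\bigr)^2$, after which one peels off one square to leave a perfect square with a line (indeed a conic) of zeros. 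What your version buys is a non-iterative, fully explicit argument (at most two subtracted squares, and an exact two-square representation in the worst stratum); what the paper's version buys is that it avoids any normalization of $p$ and any factorization analysis of $D$, at the price of chaining reductions between its cases. Two small imprecisions in your write-up are harmless: in case (II) the alternative ``$(y^2+z^2)$ times a square'' cannot actually occur when the minimum of $g$ is attained only at $(1{:}0)$ (it corresponds to $g$ proportional to $y^2+z^2$, which belongs to the other branch and is covered there), and the exhaustiveness of (I)--(III) tacitly uses $D\not\equiv0$, which indeed follows from the same observation that any zero of $D$ with $z\neq0$ would give a third zero of $f$.
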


\begin{proof}
Change coordinates so that~$f(1,0,0)=f(0,1,0)=0$. Write
\begin{equation}
f=x^2p(y,z)+2xzq(y,z)+z^2r(y,z),\label{eq:2}
\end{equation}
where $p,\ q$ and $r$ are quadratic forms with ~$p\geq0,\ r\geq 0,$ and $pr-q^2\geq 0$. Next the proof will be split into  two cases.

Case (i): If at least one of $p$ and $r$ has a zero on $S^1$, then let $(y_0,z_0)$ be the zero of $r$ without loss of generality. Substituting it into (\ref{eq:2}), one yields
$$f(x,y_0,z_0)=x^2p(y_0,z_0)+2xz_0q(y_0,z_0).$$
Then~$(0, y_0,z_0)$  is also a zero of~$f$, hence $y_0=1,z_0=0$. Substituting them into quadratic form $r(y,z)$, it is reduced to the form
$$r(y,z)=tz^2,\ t> 0.$$ From~$pr-q^2=tpz^2-q^2\geq 0$, $q$ must be divided by $z$. Thus~$q=\sqrt{t}zq_1$,
where $q_1$ is a linear form. Therefore,
$$pr-q^2=tz^2(p-q_1^2)\geq 0\Longrightarrow p-q_1^2\geq 0,$$ and
$$f=x^2p+2xz^2\sqrt{t} q_1+z^2(tz^2)=(\sqrt{t}z^2+xq_1)^2+x^2(p-q_1^2).$$
Then~$f-(\sqrt{t}z^2+xq_1)^2=x^2(p-q_1^2)$ is positive semi-definite and has infinite zeros~$(0,y,z)$. Hence
$$|Z(f-(\sqrt{t}z^2+xq_1)^2)|>3.$$

Case (ii): If neither $p$ nor $r$ is supposed to have a zero on~$S^1$ (both $p$ and $r$ are strictly positive on $S^1$), then discuss two possibilities of the discriminant~$pr-q^2$.

(a) The discriminant~$pr-q^2$ has a zero~$(y_0,z_0)\in S^1$, and let $\lambda=-\frac{q(y_0,z_0)}{p(y_0,z_0)}$. Then one has
$$f_1(x,y,z)=f(x+\lambda z, y,z)=x^2p+2xz(q+\lambda p)+z^2(r+2\lambda q+\lambda^2p).$$
Since~$(y_0,z_0)$ is a zero of~$r+2\lambda q+\lambda^2p$,
$f_1$ is reduced to the case (i).

(b) The discriminant~$pr-q^2$ has no zero on~$S^1$. Since~$p$ and $r$ are strictly positive on~$S^1$, the function
$$\frac{pr-q^2}{p(y^2+z^2)}$$
is also strictly positive on~$S^1$, and its minimum~$\lambda>0$. Thus
$pr-q^2\geq \lambda p(y^2+z^2),$
and~$f-\lambda z^2(y^2+z^2)$  is positive semi-definite. If~$r-\lambda(y^2+z^2)$ has zeros, then~$f-\lambda z^2(y^2+z^2)$ is reduced to the case (i). In contrast, if~$r-\lambda(y^2+z^2)$ has no zero on~$S^1$, then $f-\lambda z^2(y^2+z^2)$ is reduced to the case (ii) (a).

Because we have considered all two cases, we can conclude that Lemma 3 holds.
\end{proof}

\begin{lemma}
If the number of zeros of the psd quartic form $f\in \mathbb{R}[x,y,z]$ in $\mathbf{P}^2(\mathbb{R})$ is more than 3 (i.e. $|Z(f)|\geq 3$), then $f$
is a sum of squares of quadratic forms.
\end{lemma}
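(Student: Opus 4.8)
The plan is to split into the cases $|Z(f)|=\infty$ and $|Z(f)|<\infty$, and in each case to reduce $f$, using a projective change of coordinates (which preserves positive semi-definiteness and the sum-of-squares property), to a form whose decomposition into squares of quadratics is transparent.

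If $|Z(f)|=\infty$, then $f$ vanishes on a real curve, so some form $h$ of degree $1$ or $2$, irreducible over $\mathbb{R}$, divides $f$; because $f\geq 0$ while $h$ changes sign across a smooth real point of $\{h=0\}$, the multiplicity of $h$ in $f$ must be even, hence $h^2\mid f$ and $f=h^2q$ with $q$ a psd form of degree $4-2\deg h\in\{0,2\}$. A psd form of degree $0$ is a non-negative constant, and a psd quadratic form is a sum of at most three squares of linear forms (by diagonalization), so in either case $f=h^2q$ is visibly a sum of squares of quadratic forms.

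Now assume $3\leq|Z(f)|<\infty$. First, no three zeros of $f$ are collinear: if $P_1,P_2,P_3$ lay on a line $L=\{\ell=0\}$ then $f|_L$ would be a non-zero psd binary quartic with at least three distinct projective zeros, impossible since the real roots of a psd binary form have even multiplicity; hence $f|_L\equiv 0$, i.e.\ $\ell\mid f$, contradicting finiteness. So pick three zeros in general position and move them to $(1,0,0),(0,1,0),(0,0,1)$. Each of these is a global minimum of the psd form $f$, so $\nabla f$ vanishes there, which forces the nine coefficients of $x^4,x^3y,x^3z,y^4,y^3x,y^3z,z^4,z^3x,z^3y$ to be zero; the surviving monomials give
$$f=c_1x^2y^2+c_2x^2z^2+c_3y^2z^2+xyz(c_4x+c_5y+c_6z),$$
and setting two of $x,y,z$ to zero in turn shows $c_1,c_2,c_3\geq 0$. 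If one of $c_1,c_2,c_3$ vanishes, a short argument with positivity shows that $f$ is then divisible by the square of a coordinate variable with a psd cofactor of degree $\leq 2$, hence SOS as above.

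It remains to treat $c_1,c_2,c_3>0$. Rescaling $x,y,z$ by positive constants normalizes $c_1=c_2=c_3=1$, and the substitution $u=yz,\ v=zx,\ w=xy$ then presents $f$ as a quadratic form in $(u,v,w)$ whose matrix $M$ has diagonal entries $1$ and off-diagonal entries $c_i/2$. The heart of the proof is the implication ``$f$ psd $\Rightarrow M\succeq 0$'': restricting $f$ to the chart $z=1$ and viewing the result as a quadratic in $x$, non-negativity of its leading coefficient and discriminant translates, after an elementary computation, into $|c_i|\leq 2$ for all $i$ together with $c_4^2+c_5^2+c_6^2-c_4c_5c_6\leq 4$ --- which are precisely the conditions that the $2\times 2$ principal minors $1-c_i^2/4$ and the determinant $\det M=\tfrac14(4-c_4^2-c_5^2-c_6^2+c_4c_5c_6)$ of $M$ be non-negative, i.e.\ $M\succeq 0$. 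Writing $M=\sum_i\mathbf{a}_i\mathbf{a}_i^{T}$ with at most three terms then gives $f=\sum_i g_i^2$ with $g_i$ the quadratic form $\mathbf{a}_i\cdot(yz,zx,xy)$, which finishes the argument. I expect this last implication to be the main obstacle: it is the only point at which the \emph{global} positivity of a quartic must be turned into a finite list of algebraic inequalities on a candidate Gram matrix, and it depends on choosing the substitution $u=yz,\ v=zx,\ w=xy$ so that the whole question collapses to a single one-variable discriminant; the even-order divisibility fact in the infinite case is a secondary technical point. Both are elementary and explicit, so the resulting sum-of-squares decomposition is genuinely constructible.
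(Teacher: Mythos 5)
Your argument is correct, and its skeleton is the same as the paper's: move three (necessarily non-collinear) zeros to $(1,0,0),(0,1,0),(0,0,1)$, use the gradient conditions at these minima to kill every monomial of degree $\geq 3$ in a single variable, and read the surviving biquadratic form as a quadratic form in the monomials $yz,zx,xy$; the infinitely-many-zeros case is handled, as in the paper, by extracting a real factor to even multiplicity. Where you differ is in two sub-steps, and in both places your version is the more explicit one. First, to rule out three collinear zeros when $Z(f)$ is finite, the paper computes coefficients in the normal form $f=x^2p+2xzq+z^2r$ and shows $z\mid f$; you instead restrict $f$ to the line and use that a nonzero psd binary quartic has at most two distinct real projective roots --- shorter and more conceptual. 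Second, the paper merely asserts (``it is obvious'') that the quadratic form obtained from the substitution $X=xy,\ Y=yz,\ Z=zx$ is positive semi-definite; you actually prove the implication ``$f$ psd $\Rightarrow M\succeq 0$'' by a one-variable discriminant computation on the chart $z=1$, which is exactly the point a constructive account should spell out, and your stated inequalities do match the principal minors of $M$. The cost of your route is the normalization $c_1=c_2=c_3=1$ and hence the extra subcase when some $c_i=0$; this can be avoided by the uniform argument that the image of $(x,y,z)\mapsto(xy,yz,zx)$ is dense in $\{XYZ\geq 0\}$ (since $XYZ=(xyz)^2$), while a quadratic form takes equal values at antipodal points, so nonnegativity on $\{XYZ\geq0\}$ already forces $M\succeq0$. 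Two small slips to repair: $c_1,c_2,c_3\geq0$ comes from setting \emph{one} variable to zero (e.g.\ $f(1,1,0)=c_1$), not two; and your assertion that the repeated real factor in the infinite case has degree $1$ or $2$ needs the one-line remark that a factor of degree $\geq 3$ cannot occur to even multiplicity in a quartic, which is supplied by the same sign-change argument you already invoke.
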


\begin{proof}
The proof will be split into  two cases.

Case(i): If there are three zeros not on the same line, then by arranging coordinates, we may assume that
$$f(1,0,0)=f(0,1,0)=f(0,0,1)=0.$$
This implies that the degree of each variable is less than or equal to $2$. Write
$$f=ax^2y^2+by^2z^2+cz^2x^2+a_1x^2yz+b_1y^2xz+c_1z^2xy,\quad a,b,c,a_1,b_1,c_1\in \mathbb{R}.$$
Do substitution $xy=X,yz=Y,zx=Z$, and $f$ can be transformed into a positive semi-definite form in $X,Y,Z$. It is obvious that $f$ is a sum of squares of quadratic forms.

Case (ii): If all of the zeros are on the same line, then it can be split into two subcases.

(a) $f$ has infinite zeros. Then $f$ has linear factors by B$\acute{e}$zout's theorem. Since $f$ is positive semi-definite, the degree of the linear factors is even. Hence they are $$l^2h_2,\ l^4,$$ where $l$ is linear form and $h_2$ is positive semi-definite quadratic form (maybe degenerate). It is obvious that all of them are sum of squares of quadratic forms.

(b) $f$ has finite zeros. we claim that this case will never happen. The reason is as follows.

Arrange coordinates so that $f(1,0,0)=f(0,1,0)=f(x_0,y_0,0)=0$ with $x_0y_0 \neq 0$. Write
\begin{equation}
f=x^2p(y,z)+2xzq(y,z)+z^2r(y,z),\label{eq:3}
\end{equation}
where $p,q$ and $r$ are quadratic forms with~$p\geq0, r\geq 0,  pr-q^2\geq 0$. Therefore,
$$f(x_0,y_0,0)=x_0^2p(y_0,0)=0.$$
Furthermore, $x_0\neq 0$ yields $p(y_0,0)=0$, where $p$ is  a quadratic form only in $y$ and $z$. Let
$$p=ay^2+2byz+cz^2, a,b,c\in \mathbb{R}.$$
Consequently $p(y_0,0)=ay^2_0=0$, and $a=0$ by $y_0\neq 0$. Thus $z$ divides~$f$ by (\ref{eq:3}), and all of the points on the line $z=0$ are zeros of $f$. This contradicts the premise that $f$ has finite zeros on the line $z=0$.
\end{proof}

With the above 4 lemmas,  Hilbert's theorem will be proved with a chart.

\newtheorem{thm}{Theorem}
\begin{thm}[Hilbert]
A positive semi-definite ternary quartic form over the reals can be written as a sum of squares of quadratic forms.
\end{thm}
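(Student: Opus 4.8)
The plan is to use the four lemmas as rungs of a ladder: given a psd ternary quartic $f$, we will successively subtract squares of quadratic forms, each time strictly increasing the number of projective zeros, until we reach the situation $|Z(f)|\geq 3$ covered by Lemma 4, which directly exhibits a sum of squares. More precisely, I would argue by induction on $3 - |Z(f)|$ (or equivalently, treat the four cases $|Z(f)| = 0, 1, 2,$ and $\geq 3$). If $|Z(f)| \geq 3$, Lemma 4 finishes the proof immediately. Otherwise, I would invoke Lemma 1, 2, or 3 according to whether $|Z(f)|$ equals $0$, $1$, or $2$: each produces non-negative reals (not all zero) and quadratic forms $g_i$ such that $f' := f - \sum a_i g_i^2$ is psd with $|Z(f')| > |Z(f)|$.

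The key bookkeeping step is to observe that this process terminates after at most three iterations, since the zero count strictly increases and reaches $3$ (or more) in finitely many steps, at which point Lemma 4 applies. I would then unwind the telescoping identity: writing $f = f' + \sum a_i g_i^2$ at each stage and using that $\sqrt{a_i}\,g_i$ is again a quadratic form (legitimate because $a_i \geq 0$), we express the original $f$ as a finite sum of squares of quadratic forms plus the final form, which is itself a sum of squares by Lemma 4. Combining these, $f \in \mathrm{SOS}$. I would present this compactly as the promised chart: a directed path $|Z(f)|=0 \to 1 \to 2 \to \geq 3$, with the arrows labelled by Lemmas 1--3 and the terminal node by Lemma 4.

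The only genuine subtlety — and the point I would be careful to state explicitly — is that each lemma guarantees a \emph{strict} increase in the number of zeros while preserving positive semi-definiteness, so there is no risk of looping or of the construction stalling; the bound $3$ is exactly the threshold at which Lemma 4 takes over. There is essentially no hard analytic content left at this stage: all the work has been pushed into the lemmas, and the theorem follows by a short finite induction together with the elementary remark that a non-negative multiple of a square of a quadratic form is again a square of a quadratic form. Since the three lemmas are constructive (each reduces to minimizing a continuous function on $S^1$ or $S^2$ and to explicit coordinate changes), the resulting representation of $f$ as a sum of squares is obtained explicitly, step by step, which is the constructive content emphasized in the introduction.
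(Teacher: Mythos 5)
Your proposal is correct and follows essentially the same route as the paper: the paper's proof is exactly the ladder/chart argument, applying Lemma 1, 2, or 3 according to whether $|Z(f)|$ is $0$, $1$, or $2$ to strictly increase the zero count, and finishing with Lemma 4 once $|Z(f)|\geq 3$. Your explicit remarks about the telescoping identity and absorbing the coefficients $a_i\geq 0$ into the quadratic forms are left implicit in the paper but are the same argument.
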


\begin{proof}
The following chart will present the process of proof.
$$
\begin{array}{lllllll}
& & & & & &|Z(f)|\geq 3\ ({\rm SOS}, Lemma \ 4)\\
& & & & & \nearrow & \Uparrow \\
& & & & |Z(f)|\geq 2 &  & \| \ Lemma\ 3 \\
& & &\nearrow & \Uparrow  & \searrow & \|  \\
& &|Z(f)|\geq 1 & &\| \ Lemma\ 2 & &|Z(f)|=2 \\
& \nearrow &\Uparrow &\searrow &\|  & & \\
{\rm PSD}^4_3 &  &\| \ Lemma\ 1  &  &|Z(f)|=1  & & \\
& \searrow &\| & &  & & \\
&  &|Z(f)|=0 & &  & &
\end{array}
$$
The set ${\rm PSD}^4_3$ is divided into two disjoint subsets, that is, the subset with $|Z(f)|=0$ and the other one with $|Z(f)|\geq 1$. According to Lemma 1, the former can be transformed into the latter by minus a square of a quadratic form. Furthermore, the subset with $|Z(f)|\geq 1$ can be dealt with in the same way according to Lemma 2 and so forth. Finally the theorem holds according to Lemma 4.
\end{proof}

\newdefinition{rmk1}{Remark}
\begin{rmk1}
Theorem 1 is the weak edition of Hilbert's theorem. The strong edition is that the positive semi-definite ternary quartic forms are sum of no more than three squares of quadratic forms. The proof of the strong edition can see \cite{Pfister}, \cite{Powers} and \cite{SC2}.
\end{rmk1}

\newdefinition{rmk2}[rmk1]{Remark}
\begin{rmk2}
The real zeros are assumed to be known in the above proof. To achieve the goal of really practical construction , we still need some method in solving the real zeros of ternary quartic forms. This problem will be conquered in Section 3, and we will really obtain the practically constructive proof of Hilbert's theorem.
\end{rmk2}

Next we will present an example.
\newdefinition{ex1}{Example}
\begin{ex1}\citep{Cirtoaje}
Compute the sum of squares of the following ternary quartic form.
 \begin{eqnarray*}
 &f=&4(x^4+y^4+z^4)+21(xy+yz+zx)^2-10(x^2+y^2+z^2)(xy+yz+zx)\\
 & &-37xyz(x+y+z).
 \end{eqnarray*}

\emph{Solution:}
 Since $f$ has 4 zeros in the projective space. There are
 \begin{eqnarray*}
 (1,1,1),\quad (3,2,2),\quad (2,3,2),\quad (2, 2, 3).
 \end{eqnarray*}
 Select three of them and construct the following matrix (each column  is a zero of $f$)
 $$
A=\left [
\begin{matrix}
1&\  3& \ 2 \\
 1& \ 2& \ 3 \\
 1& \  2& \ 2
\end{matrix}
\right ].
$$
Do linear transformation
$$
\left[
\begin{matrix}
x\\
y\\
z
\end{matrix}
\right ]
=A
\left[
\begin{matrix}
\bar{x}\\
\bar{y}\\
\bar{z}
\end{matrix}
\right ].
$$
By computing, we have
\begin{eqnarray}
f(A[\bar{x},\bar{y},\bar{z}]^T)&=&\bar{x}^2\bar{y}^2+49\bar{y}^2\bar{z}^2+\bar{z}^2\bar{x}^2-\bar{x}^2\bar{y}\bar{z}+7\bar{y}^2\bar{z}\bar{x}
+7\bar{z}^2\bar{y}\bar{x} \nonumber \\
&=&\frac{1}{2}\left( (\bar{x}\bar{y}-\bar{z}\bar{x})^2+(\bar{x}\bar{y}+7\bar{y}\bar{z})^2+(\bar{z}\bar{x}+7\bar{y}\bar{z})^2\right).\label{eq:4}
\end{eqnarray}
Compute the inverse matrix of $A$,
$$
A^{-1}=\left [
\begin{matrix}
 -2& \ -2&\ \  5 \\
 \ 1& \ \ 0&\ -1 \\
  \ 0& \ \ 1&\ -1
\end{matrix}
\right ].
$$
Let
$$
\left[
\begin{matrix}
\bar{x}\\
\bar{y}\\
\bar{z}
\end{matrix}
\right ]
=A^{-1}
\left[
\begin{matrix}
x\\
y\\
z
\end{matrix}
\right ]
=
\left[
\begin{matrix}
-2x-2y+5z\\
x-z\\
x-y
\end{matrix}
\right ],
$$
and substitute it into (\ref{eq:4}), and then the sum of squares of $f$ is as follows.
$$f=\frac{1}{2}\sum (-2x^2+5xz+2y^2-5yz)^2.$$
\end{ex1}

\section{Computation of zeros}
In this section, we will discuss the problem of computing zeros. Lemma 4 presents a key of constructing sum of squares, that is, finding out at least three real zeros of a positive semi-definite ternary quartic form. Next we will prove a proposition (Lemma \ref{lem5}) that can solve the problem of computing  the real zeros of positive semi-definite ternary forms. The fundamental idea comes from \cite{Yang,Xia}.

\begin{lemma}\label{lem5}
Given positive semi-definite form $f\in \mathbb{R}[x,y,z]$, let $f'_x$ be the derivative of $f$ with respect to $x$, and ${\rm res}(f, f'_x, x)$ be the resultant of~$f$ and $f'_x$ with respect to $x$. Then the equation $\{f=0\}$ and the equations $\{f=0,\ {\rm res}(f,f'_x, x)=0\}$ are equivalent in the projective space $\mathbf{P}^2(\mathbb{R})$.
\end{lemma}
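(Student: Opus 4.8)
The statement asserts an equality of two subsets of $\mathbf{P}^2(\mathbb{R})$, so the plan is to prove the two inclusions. First I would record that, since $f$ is a form, $f'_x$ is a form and $\mathrm{res}(f,f'_x,x)\in\mathbb{R}[y,z]$ is a form in $y,z$, so that the loci $\{f=0\}$ and $\{f=0,\ \mathrm{res}(f,f'_x,x)=0\}$ are genuinely well defined in projective space. The inclusion $\{f=0,\ \mathrm{res}(f,f'_x,x)=0\}\subseteq\{f=0\}$ is trivial, so the whole task is the reverse one: every projective zero of $f$ is also a zero of $\mathrm{res}(f,f'_x,x)$.

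For the reverse inclusion the key point is that a psd form is rigid at its zeros. Let $P=(x_0,y_0,z_0)\neq(0,0,0)$ with $f(P)=0$. Because $f\geq 0$ on all of $\mathbb{R}^3$, the value $0$ is the global minimum of $f$ and $P$ is a global minimiser; as $f$ is a polynomial, hence everywhere differentiable, this forces $\nabla f(P)=0$, and in particular $f'_x(x_0,y_0,z_0)=0$. Hence the univariate polynomials $\bar f(x):=f(x,y_0,z_0)$ and $\overline{f'_x}(x):=f'_x(x,y_0,z_0)$ share the root $x=x_0$.

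It then remains to pass from ``$\bar f$ and $\overline{f'_x}$ have a common root'' to ``$\mathrm{res}(f,f'_x,x)$ vanishes at $(y_0,z_0)$''. I would do this through the B\'ezout identity for resultants rather than through a case split on leading coefficients: over the ring $R=\mathbb{R}[y,z]$ there exist $A,B\in R[x]$ with $A\,f+B\,f'_x=\mathrm{res}(f,f'_x,x)$ as an identity in $R[x]$. Substituting $(y,z)=(y_0,z_0)$ gives an identity in $\mathbb{R}[x]$ whose right-hand side is the constant $\mathrm{res}(f,f'_x,x)(y_0,z_0)$; evaluating it at $x=x_0$ and using $\bar f(x_0)=\overline{f'_x}(x_0)=0$ yields $\mathrm{res}(f,f'_x,x)(y_0,z_0)=0$, which is exactly what is needed.

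The main obstacle, and the only place needing care, is the degenerate behaviour of the resultant under specialization, which I would dispatch as two bookkeeping cases. If $(y_0,z_0)=(0,0)$, i.e.\ $P=(1,0,0)$, then $\mathrm{res}(f,f'_x,x)$, being a form of positive degree in $y,z$, automatically vanishes at the origin and there is nothing to prove; and if $\mathrm{res}(f,f'_x,x)\equiv 0$ --- which happens precisely when $f$ and $f'_x$ have a common factor of positive degree in $x$ --- then the second equation is vacuous and the equivalence is immediate. Using the B\'ezout-identity form of the resultant is what lets the remaining case go through cleanly, since it avoids having to track the drop in $x$-degree of $f$ when its leading coefficient in $x$ vanishes at $(y_0,z_0)$. (By symmetry the same argument applies with $f'_y$ or $f'_z$, which is why the system generically cuts out a finite set of zeros; this is the feature exploited in the subsequent computation.)
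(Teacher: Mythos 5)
Your proposal is correct and follows essentially the same route as the paper: the trivial inclusion one way, and for the converse the observation that a zero of a psd form is a global minimizer, so $\nabla f$ (in particular $f'_x$) vanishes there, whence the specialized polynomials $f(x,y_0,z_0)$ and $f'_x(x,y_0,z_0)$ share the root $x_0$ and the resultant vanishes at $(y_0,z_0)$. Your use of the B\'ezout identity $Af+Bf'_x=\mathrm{res}(f,f'_x,x)$, together with the explicit treatment of the cases $(y_0,z_0)=(0,0)$ and $\mathrm{res}(f,f'_x,x)\equiv 0$, is simply a more careful justification of the step the paper dispatches by citing ``the fundamental property of resultant,'' since it sidesteps the usual specialization caveat about vanishing leading coefficients.
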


\begin{proof}
It is obvious that $Z(f)\supseteq Z(f,\ {\rm res}(f,f'_x, x))$. Next we will prove $Z(f)\subseteq Z(f,\ {\rm res}(f,f'_x, x)).$

It is easy to get that the real zero $(x_0,y_0,z_0)\in S^2$ of positive semi-definite form $f$ need to satisfy the following equations for solving stationary points.
\begin{equation}
\left \{
\begin{array}{l}
f'_x=0,\\
f'_y=0,\\
f'_z=0.
\end{array}
\right .
\end{equation}
This is because of the fact that if at least one of the following values
$$f'_x|_{(x_0,y_0,z_0)},\ f'_y|_{(x_0,y_0,z_0)},\ f'_z|_{(x_0,y_0,z_0)}$$
is not zero, then $f$ can be approximated by linear functions on a neighborhood of the point $(x_0,y_0,z_0)$. So
$f$ is not nonnegative on a neighborhood of the point $(x_0,y_0,z_0)$. This contradicts the premise that $f$ is positive semi-definite. Thus $x_0$ satisfies the following equations.
\begin{equation}
\left \{
\begin{array}{l}
f(x,y_0,z_0)=0,\\
f'_x(x,y_0,z_0)=0.
\end{array}
\right .
\end{equation}
Then ${\rm res}(f,f'_x,x)|_{(y_0,z_0)}=0$ according to the fundamental property of resultant. That is, $Z(f)\subseteq Z(f, {\rm res}(f,f'_x, x)).$
\end{proof}

\newdefinition{rmk3}{Remark}
\begin{rmk3}
In Lemma \ref{lem5} $f$ cannot include square factors, otherwise ${\rm res}(f,f'_x,x)$ would identically equal to 0. Thus the equations $\{f=0, {\rm res}(f,f'_x,x)=0\}$ in $\mathbf{P}^2(\mathbb{R})$ is zero dimension, and we can solve the above equations with various efficient methods such as method of resultant \citep{Kapur,Canny}, rational single variable present method \citep{Rouillier}, etc.
\end{rmk3}

\newdefinition{rmk4}[rmk3]{Remark}
\begin{rmk4}
The general version of Lemma 5 (successive resultant method \citep{Yang,Xia}) also ensure the minimum of Lemma 1, Lemma 2 and Lemma 3 can be computed accurately, and then ensure the above proof of Hilbert's theorem is completely constructive. However, the following example (example 3) shows that the practical computation may be very complex and finally comes to nothing because of high time complexity.
\end{rmk4}

Next we will present two computing examples.

\newdefinition{ex2}[ex1]{Example}
\begin{ex2}(\textbf{Vasile Cirtoaje})\citep{Cirtoaje}
Compute real zeros of $$ f=(x^2+y^2+z^2)^2-3(x^3y+y^3z+z^3x).$$

\emph{Solution:} Firstly compute
$$f'_x=4(x^2+y^2+z^2)x-9x^2y-3z^3.$$
Then compute the resultant ${\rm res}(f,f'_x,x)$,
$${\rm res}(f,f'_x,x)=9(13y^4-18y^3z-y^2z^2-6yz^3+13z^4)(-z+y)^2(y^3-5y^2z+6yz^2-z^3)^2.$$
By~Gram matrix method \citep{Choi2}, one yields
\begin{eqnarray*}
h &=&(13y^4-18y^3z-y^2z^2-6yz^3+13z^4)\\
 &=&13(y^2-\frac{9}{13}yz-\frac{7}{13}z^2)^2+286(\frac{2}{13}yz-\frac{51}{286}z^2)^2+\frac{3}{22}z^4.
\end{eqnarray*}
Thus $h$ is strictly positive on $S^1$. Hence
$${\rm res}(f,f'_x,x)=0 \Longleftrightarrow (-z+y)(y^3-5y^2z+6yz^2-z^3)=0.$$
By using real root isolation algorithm \citep{Xia}, we know that the equation ${\rm res}(f,f'_x,x)$ $=0$ has four real zeros in the projective space $\mathbf{P}^2(\mathbb{R})$:
$$(1,1),\ (\alpha_1, 1),\ (\alpha_2, 1),\ (\alpha_3, 1),$$
where $\alpha_1<\alpha_2<\alpha_3$ are three positive real zeros of $t^3-5t^2+6t-1=0$. Substituting the above four points into the polynomial $f$, and we have four real zeros of $f=0$ in the real projective space $\mathbf{P}^2(\mathbb{R})$. There are
\begin{equation}
(1,1,1),\ (\frac{1}{\alpha_2},\alpha_1,1),\ (\frac{1}{\alpha_3},\alpha_2,1),
\  (\frac{1}{\alpha_1},\alpha_3,1).
\end{equation}
Consequently $f$ can be written as a sum of squares according to Lemma 4.
\end{ex2}

\newdefinition{ex3}[ex1]{Example}
\begin{ex3}(\textbf{C. Scheiderer})\citep{Scheiderer}
$g=x^4+y^4+z^4+xy^3+xz^3+yz^3-3x^2yz-4xy^2z+2x^2y^2.$

In \cite{Scheiderer} this form is proven to have the following characters: $g$ cannot be written as sum of squares of quadratic forms with rational coefficients. That is to say the answer of Sturmfels's question \citep{Scheiderer} is negative. It means that it is very difficult to write $g$ as sum of squares for the tools that depend on numerical computation such as SOSTOOLS \citep{Parrilo}.

Next we compute the zeros of $g$ in the real projective space.

Firstly we compute
$$g'_x=4x^3+4xy^2-6xyz+y^3-4zy^2+z^3.$$
Then compute the resultant
\begin{align*}
\begin{array}{lll}
{\rm res}(g,g'_x,x)&=&229y^{12}-1904y^{11}z+5896y^{10}z^2+1376y^9z^3-12176y^8z^4\\
& &+6432y^7z^5+8630y^6z^6-9472y^5z^7+952y^4z^8+3232y^3z^9\\
& &-96y^2z^{10}+336yz^{11}+229z^{12}.
\end{array}
\end{align*}
By using real root isolation algorithm, we know that ${\rm res}(g,g'_x,x)$ only has one trivial zero $(0,0)$, but~$(1,0,0)$
is not the zero of~$g$. Hence~$Z(g)=\emptyset$ (empty set), and $g$ is strictly positive on~$S^2$.

By Lemma 1, one yields that there is a positive constant $t$ such that
$$g_t=g-t(x^2+y^2+z^2)^2$$
is still positive semi-definite and has at least a zero in the real projective space $\mathbf{P}^2(\mathbb{R})$. Furthermore, by using successive resultant algorithm \citep{Yang,Xia}, one can compute $t$ accurately. $t$, in the interval $[51/512, 103/1024]$ and the approximation being $0.10009018$, is the real zero of the following equation with degree 12.
\begin{align*}
\begin{array}{ll}
&1540909743009169408\ x^{12}-13437733654176464896\ x^{11}\\
&+51805978528683065344\ x^{10}-116396366581901484032\ x^9\\
&+168975565335348900096\ x^8-165910705322168135008\ x^7\\
&+111957978056509355125\ x^6-51652982930080321180\ x^5\\
&+15876922302830413280\ x^4-3088008227838928440\ x^3\\
&+347409936566531728\ x^2-19347901948050048\ x\\
&+380514157362176=0.
\end{array}
\end{align*}
It is very difficult to compute the real zeros of $g_t$ for high time complexity.

In \cite{Scheiderer} $g$ can be written as
$$g=\frac{1}{4}\left ((2x^2+\beta y^2-yz+(2+\frac{1}{\beta})z^2 )^2
-\beta (2xy-\frac{y^2}{\beta}+\frac{2xz}{\beta}+\beta yz-z^2 )^2\right ),$$
where $\beta$ is a negative zero of equation~$t^4-t+1=0$.
\end{ex3}

\section{Conclusion}
In this paper, we give a proof of Hilbert's theorem by four lemmas and the ladder technique. According to the proof, catching at least three real zeros of a positive semi-definite ternary quartic form is necessary for constructing its sum of squares. Consequently, we present the method of locating the zeros based on the property that zeros of a positive semi-definite form satisfy the equations for stationary points. However the representation of sum of squares using this method may be complex and how to build a clear representation is a new question.

\bibliographystyle{elsarticle-harv}

\end{document}